\setlist[itemize]{noitemsep} 
\renewcommand\thesection{\Roman{section}} 
\renewcommand\thesubsection{\roman{subsection}} 
\titleformat{\section}[block]{\large\scshape\centering}{\thesection.}{1em}{} 
\titleformat{\subsection}[block]{\large}{\thesubsection.}{1em}{} 
\newtheorem{Theor} {Theorem}
\newtheorem{rk}{Remark}
\newtheorem{cor}{Corollary} 
\newtheorem{defi}{Definition}
\newtheorem{prop}{Proposition}
\title{Deterministic partial binary circulant compressed sensing matrices} 
\author{%
\textsc{Arman Arian and \"{O}zg\"{u}r Y\i lmaz} \\[1ex] 
\normalsize Department of Mathematics, \\ University of British Columbia \\ 
}
\date{} 
\begin{document}

\maketitle



\label{ourconst}
\section{Introduction}

In this paper, we present a novel construction for deterministic CS matrices  based on decimated Legendre sequences. As we know, Legendre sequence provides a binary sequence with $\pm 1$ entries which initially seems ideal to use in the context of CS. However, in order to be able to use these sequences as rows or columns of a measurement matrix, one has to guarantee a low maximum correlation between two such sequences. This was done first by Zhang et al. \cite{zhang} in 2002 (before the birth of CS) in the context of coding theory, and by considering \textit{decimated} Legendre sequences. The use of Legendre symbol in CS with random matrices was proposed by Bandeira et al. \cite{bandeira} in 2016. The summary of their work is given in Section \ref{sumlegendre}.   In the same year, Zhang et al. \cite{detlegendre} proposed  the use of Legendre symbol for the construction of deterministic CS matrices. In fact, their construction was based on their previous work in 2002  which was done in the context of coding theory, and  has the feature of  being binary, and having low coherence. Moreover, since any prime number can be used as the number of measurements,  the difference between size of two adjacent matrices in their construction is low compared to many other deterministic constructions. 

As outlined below, another important feature that a CS matrix can have is the \textit{circulant matrix} structure (see Section \ref{circulantmat}). The use of circulant matrices in random CS was first proposed by Bajwa et al. \cite{bajwa} in 2007. An equivalent approach was proposed by Romberg \cite{romcirculant} in 2009. In the latter approach, given a signal $x \in \mathbb{R}^n$, first a \textit{convolution} of $x$, of the form $Hx$ is considered, where $H$ can be written in the form of $$H=\frac{1}{\sqrt{n}} 	F^* \Sigma F$$ Here, $n$ is as usual the ambient dimension, $F$ is the discrete Fourier matrix, and $\Sigma$ is a diagonal matrix whose diagonal entries are complex numbers with unit norm, and random phases. Following this step, we subsample the measurements. Therefore, the measurement matrix in this approach can be written as $\Phi=R_{\Omega} H$, where $\Omega \subseteq \{1,2,\cdots,n\}$ is a set with $m$ elements, and $R_{\Omega}$ is a sampling operator that restricts the rows to a random set $\Omega$, i.e., a random choice of a set $\Omega$ among all possible $\binom{n}{m}$ such sets. Based on this idea, Li  et al. \cite{liconv} considered a measurement matrix of the form $\Phi=R_{\Omega} A$, where $A$ is a deterministic matrix. Since $R_{\Omega}$ here is a random sampling operator, the measurement matrix in their construction can not be considered as \textit{deterministic} yet. To the best of  our knowledge, the only class of \textit{circulant deterministic} matrices considered in the literature so far is the class of matrices introduced by Cui \cite{cuiconst}. In their paper, they constructed the circulant matrix $A$ by first writing it of the form $A=\frac{1}{\sqrt{n}} F^* \mbox{diag}(\sigma) F$, then, considering the sequence $\sigma$ as a \textit{decimated Legendre} sequence, and finally considering the \textit{first} $m$ rows of $A$ as the measurement matrix. They show \textit{empirically} that this matrix performs very well as a measurement matrix, but no proof was given in this regard. Other \textit{deterministic binary} matrices given in the literature in the context of CS include DeVore construction \cite{devore}, and the constructions given in \cite{li,naidu,amini}.  

To the best of our knowledge, the construction we introduce in this chapter is the \textit{first} deterministic binary circulant construction in CS which is proved to have low coherence, and hence, can be used for recovery of sparse signals. Compared to the work of Cui \cite{cuiconst}, in addition to giving a proof for why the construction can be used in CS, our construction has the advantage of having a \textit{simple}, \textit{explicit} formula for each entry of the measurement matrix itself (as opposed to its diagonalization). The circulant structure of our construction allows us to perform a fast matrix-vector multiplication, and a fast recovery algorithm. Moreover, our construction has a small  difference between the sizes of two adjacent matrices. (as we will see below, the number of measurements in our construction is chosen as $ \lceil p^{3/4} \rceil $, where $n=p$ is a prime number and is assumed to be the ambient dimension). Lastly, we will show that we can perform the one-stage recovery $\Sigma \Delta$ quantization as given in \cite{rongrong} and generalized in \cite{armanozgur} using our construction. Similar to the constructions given in \cite{cuiconst,bandeira,detlegendre}, our construction exploits Legendre symbol.  The definition and basic properties of Legendre symbol can be found in any elementary Number Theory textbook, e.g., in \cite{jones}.

\section{Compressed sensing matrices using Legendre sequence}
\label{sumlegendre}
Fix a prime number $p$, and an element $x \in \mathbb{Z}_p$. Using the Legendre symbol, we observe that the entries of any vector of the form  $v=( \Big( \frac{1+x}{p} \Big), \Big( \frac{2+x}{p} \Big), \cdots, \Big( \frac{p+x}{p} \Big) )$ are evenly distributed with $\pm 1$ entries. Hence, one can construct CS matrices using these vectors (or similar vectors). In \cite{bandeira}, Bandeira et al. constructed a class of $m \times n$ random matrices whose $(i,j)$th entry is obtained via \begin{equation} \label{legendreconst} \Phi_{i,j} :=\frac{1}{\sqrt{m}} \Big( \frac{x+m(j-1)+i}{p} \Big) \end{equation}

\noindent where $p$ is a large prime number, and $x$ is drawn uniformly from a set of the form $\{0,1,2, \cdots, 2^h-1 \}$. They show that for appropriate choices of $k, m, n, \delta$, and $h$, such matrices are RIP of order $k$ and constant $\delta$ with high probability. They also make a conjecture implying that the class of deterministic matrices obtained from fixing $x=0$ in  \eqref{legendreconst} are RIP in the \textit{optimal regime} (and hence, would break the square-root barrier). As mentioned above, one can also use the Legendre symbol to construct binary deterministic matrices. Examples of such constructions are given in  \cite{detlegendre,cuiconst}. 

Since our goal in this chapter is to consider a class of deterministic binary \textit{circulant} matrices, we next summarize the definition and main features of circulant matrices in the context of CS.   

\section{Circulant matrices}
\label{circulantmat}

The use of (random) circulant matrices in CS  has been originally suggested by Bajwa et al. \cite{bajwa} in 2007, followed by the results by Romberg \cite{romcirculant, rauhutcirculant} in 2009. As stated in \cite{rauhutcirculant}, there is an advantage for using circulant matrices compared to Gaussian and Bernoulli random matrices in CS, as generating these matrices is faster, and more importantly the matrix-vector multiplication process is faster when we use these matrices which in turn makes the reconstruction algorithm faster. Moreover, they arise in applications such as identifying  linear time-invariant systems \cite{bajwacompressed}.  We know that CS can be applied in MRI using partial Fourier matrices. However, application of CS in MRI with using  a generalization of circulant matrices called \textit{Toeplitz matrices} has been proposed in \cite{mri}, and has been shown that it has an advantage over the classical method in the sense that it spreads out the image energy more evenly.

\begin{defi}

A circulant matrix is a matrix of the form $$C=\begin{bmatrix} c_1 & c_2 & .... & c_n \\ c_n & c_1 & .... & c_{n-1} \\ \vdots \\ c_{2} & c_3 & ... & c_1 \end{bmatrix}$$

\noindent We say that $C$ is generated by the vector $v=(c_1, c_2, ... ,c_n)$.

\end{defi}

It is seen in the definition above that a circulant matrix is a square matrix and hence, is not appropriate to use in the context of CS. Therefore, in practice, random rows of such matrices have been considered to obtain a class of matrices called ``partial circulant matrices". It is shown in \cite{rauhutcirculant2} that if a circulant matrix is generated by a Rademacher sequence, then with high probability, the partial circulant matrix obtained from choosing $m$ rows of the (square) $n \times n$ circulant matrix satisfies RIP with $\delta_{2k}<1/\sqrt{2}$ if $m \gtrsim k^{3/2} \log^{3/2} n$. This condition was later improved by Krahmer et al. \cite{mendelson2} to the (suboptimal) condition $m \gtrsim k \log^2 k \cdot \log^2 n$. Hence, to compare these random matrices with the sub-Gaussian random matrices, we observe that there is an additional $\log^2 k$ factor in the expression for the minimum number of measurements. However, on the positive side, these matrices can be diagonalized using Fourier transform, and therefore, using FFT,  the process of matrix-vector multiplication as well as running the reconstruction algorithm become faster.

 Now, we give an explicit formula for the novel class of deterministic \textit{partial circulant} matrices.  We obtain bounds for the coherence of these matrices, which we then use to identify requirements that relate the sparsity level $k$ to the number of measurements $m$ (as well as the ambient dimension $n$) to ensure that these matrices can be used as CS matrices. We also perform some numerical experiments to compare the performance of these matrices with other deterministic and  random CS matrices.  Finally, we investigate the problem of quantization when these matrices are used as measurement matrices, and we provide theoretical guarantees for the error in reconstruction using one-stage recovery method in the case of $r$th order $\Sigma \Delta$ quantization.

\section{A novel, explicit construction}
\label{ourconst2}

Consider the $p \times p$ deterministic matrix $A$  defined by $$\begin{aligned} A_{i,j}=\begin{cases} \Big( \frac{j-i}{p} \Big)  & \mbox{if $i \ne j$} \\ 1 & \mbox{if $i=j$}  \end{cases} \end{aligned} $$ where $\Big( \frac{.}{p} \Big)$ denotes the Legendre symbol, and $1 \leq i,j \leq p $. \\

\begin{prop} Let $p$ be any prime number, then the matrix $A$ as defined above is a circulant matrix. 
\end{prop}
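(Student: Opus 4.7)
The plan is to verify directly from the formula that $A_{i,j}$ depends only on the residue of $j-i$ modulo $p$. This is precisely the defining property of a circulant matrix: the first row then serves as a generating vector $v = (c_1, \ldots, c_p)$, and the $k$th row of $A$ is obtained from the first by cyclically shifting $k-1$ places to the right.

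I would read off the candidate $v$ from the first row of $A$, setting $c_1 = A_{1,1} = 1$ and $c_k = A_{1,k} = \left(\frac{k-1}{p}\right)$ for $2 \le k \le p$. The goal is then to check that $A_{i,j} = c_{((j-i) \bmod p) + 1}$ for every pair $(i,j) \in \{1,\ldots,p\}^2$, where the residue is taken in $\{0, 1, \ldots, p-1\}$. I would split into two cases: if $i = j$, the exponent reduces to $1$ and both sides equal $1$; if $i \ne j$, the claim reduces to the mod-$p$ periodicity of the Legendre symbol, namely $\left(\frac{a}{p}\right) = \left(\frac{a+p}{p}\right)$ whenever $p \nmid a$.

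The only point that needs a second look is the wrap-around when $j < i$, since then $j - i$ is negative and lies in $\{-(p-1), \ldots, -1\}$; after adding $p$ to move it into $\{1, \ldots, p-1\}$, periodicity of the Legendre symbol guarantees that $\left(\frac{j-i}{p}\right) = \left(\frac{j-i+p}{p}\right) = c_{((j-i) \bmod p)+1}$, as required. I do not anticipate any serious obstacle here beyond this modular bookkeeping: the entire argument is structural and ultimately relies on the fact that $\left(\frac{\cdot}{p}\right)$ descends to a well-defined function on $\mathbb{Z}/p\mathbb{Z}$, together with the explicit special-case treatment of the diagonal entries built into the definition of $A$.
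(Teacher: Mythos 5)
Your argument is correct and is essentially the paper's proof in an equivalent form: the paper verifies that each row is the cyclic shift of the previous one, while you verify directly that $A_{i,j}$ depends only on $(j-i) \bmod p$, and both reduce to the same fact, namely that $\left(\frac{\cdot}{p}\right)$ is well defined on $\mathbb{Z}/p\mathbb{Z}$ (plus the separate diagonal convention). No gaps.
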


\begin{proof} First, we define the operator $\mathcal{S}: \mathbb{R}^p \to \mathbb{R}^p$ as follows. $$\mathcal{S} \Big( (x_1,x_2,...,x_p) \Big) := (x_p,x_1, 
\cdots ,x_{p-1})$$

\noindent Now, if we denote the rows of $A$ by $A^1$, $A^2$, ..., $A^{p}$, then to prove $A$ is a circulant matrix, we need to show $\mathcal{S}(A^i)=A^{i+1}$ for $1 \leq i \leq p-1$. Next, 

$$\begin{aligned} \mathcal{S}(A^i) & =\mathcal{S} (A_{i1},A_{i2},...,A_{ip}) =\mathcal{S} \Big( (\frac{1-i}{p} ), (\frac{2-i}{p}),...,(\frac{p-i}{p}) \Big) \\ & =\Big( (\frac{p-i)}{p}), (\frac{1-i}{p}),...,(\frac{p-1-i}{p}) \Big) \\ &  = \Big( (\frac{1-(i+1)}{p}), (\frac{2-(i+1)}{p}),...,(\frac{p-(i+1)}{p}) \Big) \\ & = (A_{i+1,1},A_{i+1,2},...,A_{i+1,p}) = A^{i+1} \end{aligned} $$

\noindent as desired. \end{proof}

\noindent Given the matrix $A$ above, we define the $\lceil p^{3/4} \rceil \times p$ (where $\lceil x \rceil$ denotes the ceiling of $x$) measurement matrix $\Phi$ by considering the first $\lceil p^{3/4} \rceil$ rows of the matrix $A$  (and then normalizing it).

\begin{defi} \label{partialdefi} Let $p \geq 3$ be a prime number.  The $(i,j)$ th entry of the measurement matrix $\Phi$, of our construction is defined as follows. $$\begin{aligned} \Phi_{i,j}=\begin{cases} \frac{1}{\sqrt{\lceil p^{3/4} \rceil }}\Big( \frac{j-i}{p} \Big)  & \mbox{if $i \ne j$} \\ \frac{1}{\sqrt{\lceil p^{3/4} \rceil}} & \mbox{if $i=j$}  \end{cases} \end{aligned}$$

\noindent with $1 \leq i \leq \lceil p^{3/4} \rceil$, and $1 \leq j \leq p$ \end{defi} Based on this definition, $\Phi$ is a deterministic partial circulant matrix, and hence, all the features and benefits of using a circulant measurement matrix, such as less space required to store the matrix, or fast matrix-vector multiplication, and fast reconstruction scheme can be applied to these matrices.

\begin{Theor}
\label{const}
There exists $p_0 \geq 23$ such that for $p \geq p_0$, the coherence $\mu$ of the $\lceil p^{3/4} \rceil \times p $ matrix $\Phi$ defined above satisfies $$\mu \leq  \frac{3 p^{1/2} \log p}{ p^{3/4}}=  \frac{3 \log p}{p^{1/4}} $$    Hence, for these matrices $$\delta_k \leq k \mu \leq 3 k \frac{\log p}{p^{1/4}} $$ Thus, in the context of CS, these matrices can be used for measurement and $\ell_1$ recovery  of vectors with the sparsity level of $k=\mathcal{O} (\frac{p^{1/4}}{\log p })$.
\end{Theor}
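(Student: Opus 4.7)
The plan is to bound the coherence $\mu$ by a direct estimate of the off-diagonal Gram entries $\langle \Phi_{\cdot,j_1},\Phi_{\cdot,j_2}\rangle$ for $j_1\ne j_2$, and then invoke the classical inequality $\delta_k\le (k-1)\mu$ together with any standard $\ell_1$-recovery criterion. With $m:=\lceil p^{3/4}\rceil$, the columns of $\Phi$ are unit-normalized $\pm 1/\sqrt m$ vectors and
$$m\,\langle \Phi_{\cdot,j_1},\Phi_{\cdot,j_2}\rangle=\sum_{i=1}^m s_{i,j_1}\,s_{i,j_2},\qquad s_{i,j}:=\Bigl(\tfrac{j-i}{p}\Bigr)\text{ if }i\ne j,\ s_{i,j}:=1\text{ if }i=j.$$
At most two indices $i\in[1,m]$ satisfy $i=j_1$ or $i=j_2$, and at those indices the convention $s_{i,i}=1$ replaces $(\tfrac{0}{p})=0$, so the replacement perturbs the sum by at most one per such index. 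Away from those indices, multiplicativity of the Legendre character gives $s_{i,j_1}s_{i,j_2}=\bigl(\tfrac{(i-j_1)(i-j_2)}{p}\bigr)$, so the task reduces to bounding the incomplete quadratic character sum
$$S_{j_1,j_2}\;:=\;\sum_{i=1}^{m}\Bigl(\tfrac{(i-j_1)(i-j_2)}{p}\Bigr).$$

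The central step is to establish $|S_{j_1,j_2}|\le 3\sqrt p\,\log p$ (up to an absorbable additive boundary correction of $O(1)$). Because $j_1\not\equiv j_2\pmod p$, the polynomial $f(x)=(x-j_1)(x-j_2)$ is squarefree mod $p$, so Weil's character-sum theorem forces the \emph{complete} sum $\sum_{x=0}^{p-1}\bigl(\tfrac{f(x)}{p}\bigr)$ to have magnitude at most $\sqrt p$. To pass from the complete sum to the incomplete sum on $[1,m]$ I would use the standard Polya--Vinogradov completion technique: expand $\mathbf 1_{[1,m]}$ as a discrete Fourier series modulo $p$, swap the order of summation, and apply the Weil bound to each hybrid sum $\sum_{i=0}^{p-1}\bigl(\tfrac{f(i)}{p}\bigr)\exp(2\pi\mathrm{i} t i/p)$, which remains $O(\sqrt p)$ by Weil's bound for mixed multiplicative/additive character sums. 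The nontrivial Fourier coefficients of $\mathbf 1_{[1,m]}$ satisfy $\sum_{t\ne 0}|\hat{\mathbf 1}_{[1,m]}(t)|=O(p\log p)$ via the usual $1/|\sin(\pi t/p)|$ estimate, and after dividing by $p$ the claimed bound on $|S_{j_1,j_2}|$ drops out once $p\ge p_0$ for an explicit $p_0\ge 23$.

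Combining these two bounds and dividing by $m\ge p^{3/4}$ produces $\mu\le 3\sqrt p\,\log p/p^{3/4}=3\log p/p^{1/4}$. The stated $\delta_k$ estimate is then the textbook Gershgorin-type bound $\delta_k\le (k-1)\mu\le k\mu$, and the sparsity conclusion follows because any conventional $\ell_1$-recovery guarantee (e.g.\ $\delta_{2k}<1/\sqrt 2$) requires $k\mu=\mathcal O(1)$, forcing $k=\mathcal O(p^{1/4}/\log p)$. The main obstacle I foresee is bookkeeping the absolute constants through the Fourier completion carefully enough to obtain the explicit constant $3$ (rather than an uncontrolled $C$), which is what pins down the threshold $p_0$; a secondary point to verify cleanly is that the diagonal modification $s_{i,i}=1$ affects neither the column norms nor the Gram entries beyond an additive $O(1/m)$, well below the leading order $\log p/p^{1/4}$.
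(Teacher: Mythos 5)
Your reduction is the same one the paper uses: express each off-diagonal Gram entry as an incomplete sum of the quadratic character evaluated at $f(x)=(x-j_1)(x-j_2)$ via multiplicativity of the Legendre symbol, bound that sum by $O(\sqrt p\,\log p)$, and divide by $m=\lceil p^{3/4}\rceil$. The only structural difference is in how the character-sum bound is obtained: the paper invokes the incomplete Weyl-sum estimate (Theorem 9.2 of the cited reference on residues) as a black box, giving $|\sum_{x=0}^{N}\chi_p(f(x))|\le d(1+\log p)\sqrt p$ directly, whereas you propose to rederive that estimate from the Weil bound by the P\'olya--Vinogradov completion technique (Fourier-expanding $\mathbf 1_{[1,m]}$ modulo $p$ and bounding the hybrid sums $\sum_i\chi(f(i))e(ti/p)$ by $O(\sqrt p)$). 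Your route is the standard proof of the cited theorem, so it is sound; it buys self-containedness at the cost of the constant-tracking you already flag, and the arithmetic does close: the completion gives roughly $2\sqrt p\cdot\frac{2}{\pi}\log p$ plus lower-order terms, which sits below $3\sqrt p\log p$ for $p$ past an explicit threshold, matching the paper's use of $p\ge 23$ (so that $\log p\ge 3$) to absorb the additive slack. One point where you are actually more careful than the paper: the paper's identity for $\langle\Phi_a,\Phi_b\rangle$ silently treats the diagonal entries $\Phi_{i,i}=1/\sqrt m$ as if they were Legendre values, ignoring that $\bigl(\tfrac{0}{p}\bigr)=0$; as you note, at most two indices are affected and each perturbs the unnormalized sum by at most $1$, which is absorbed into the same slack, but your explicit accounting of this is a genuine (if minor) improvement in rigor. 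The final steps, $\delta_k\le(k-1)\mu\le k\mu$ and the resulting sparsity scale $k=\mathcal O(p^{1/4}/\log p)$, coincide with the paper's.
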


%
\noindent Note that if we use $\Phi$ as a measurement matrix, then  the maximum sparsity level $k$, and the number of measurements $m$ are related via $k=\mathcal{O}\Big( \frac{m^{1/3}}{\log n} \Big)$ since as we observed above, the number of measurements is of order $p^{3/4}$, and the the maximum sparsity level $k$ is of order $\frac{p^{1/4}}{\log p}$. This clearly compares unfavourably  to the random case; however note that our construction is \textit{explicit} and \textit{fast}. 

\begin{rk}

 Similar results will hold if we construct a $\lceil p^{\alpha} \rceil \times p$ CS matrix $\Phi$ with the similar definition (with $ 1/2 < \alpha <1$) given in section II.  Specifically, we define the $\lceil p^{\alpha}\rceil \times p$ matrix $\Phi$ as $$\Phi_{i,j}= \frac{1}{\sqrt{\lceil p^{\alpha}\rceil}} \Big( \frac{j-i}{p} \Big) $$ where $1 \leq i \leq \lceil p^{\alpha} \rceil$ and $1 \leq j \leq p$. Such matrix $\Phi$ can be used for the exact recovery of signals with the sparsity level $k$ satisfying $$k \leq \frac{p^{\alpha-1/2}}{18 \sqrt{2} \log p }=\mathcal{O} \Big( \frac{m^{\frac{\alpha-1/2}{\alpha}}}{\log n} \Big)$$ 
 
 \noindent where we used the fact that $m=\lceil p^{\alpha} \rceil$, and $n=p$. Hence, if we choose a larger value for $\alpha$ (close to 1 ), then the matrix $\Phi$ can be used for recovery of a  larger class of signals but we need more number of measurements. While if we choose a smaller value for $\alpha$ (close to 1/2), then the matrix $\Phi$ is closer to be an ideal CS matrix (where the number of measurements is much less than the ambient dimension) but on the other hand $\Phi$ can be used for recovery of a  smaller class of signals. Note that $0<\frac{\alpha-1/2}{\alpha}<1/2$ for $1/2<\alpha<1$, which is within boundaries of square-root barrier for deterministic matrices. 
\end{rk}

\noindent To prove Theorem \ref{const}, we should consider the inner product of two distinct columns of our measurement matrix. As we will see below, the inner products are related to a quantity called ``\textit{incomplete Weyl sums}". Let $\chi_p$ be a character modulus $p$, and let $f(x) \in \mathbb{Z}_p[x]$ be a polynomial. A \textit{complete Weyl sum} is a sum of the form $$\sum_{x=0}^{p-1} \chi_p \Big( f(x) \Big). $$ 

\noindent It can be shown \cite{residues} that if $f(x) \in \mathbb{Z}_p [x]$ is monic of degree $d \geq 1$ and $f$ has distinct roots in $\mathbb{Z}_p$, then $$\Big| \sum_{x=0}^{p-1} \chi_p \Big( f(x) \Big) \Big| <d \sqrt{p} $$

\noindent An incomplete Weyl sum is a sum of the form $$\sum_{x=M}^{N} \chi_p \Big( f(x) \Big), $$

\noindent where $M, N \in \mathbb{Z}_p$. In the case of $f(x)=x$, an upper bound was found by Polya in 1918 : $$ \Big| \sum_{x=M}^{N} \chi_p(x) \Big| \leq \sqrt{p} \log p $$ 

\noindent In the case of an arbitrary polynomial $f(x)$, the following bound on the Incomplete Weyl sums can be derived.

\begin{Theor}  \label{legendresymb} (Incomplete Weyl-Sum Estimate, Theorem 9.2 in \cite{residues}) : There exists $p_0>0$ such that for any prime number $p \geq p_0$, and for any monic polynomial $f(x) \in \mathbb{Z}_p$ of degree $d \geq 1$ and with distinct roots in $\mathbb{Z}_p$ we have $$ \Big| \sum_{x=0}^{N} \chi_p \Big( f(x) \Big) \Big| \leq d(1+\log p) \sqrt{p}$$

\noindent for any integer $N \in \mathbb{Z}_p$. 

\end{Theor}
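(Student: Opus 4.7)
The plan is to adapt the classical P\'olya--Vinogradov completing-the-sum technique, which converts an incomplete character sum into a weighted average of complete character sums over $\mathbb{Z}_p$. First I would rewrite the indicator of $\{0,1,\ldots,N\}$ using additive characters: setting $e_p(u) := e^{2\pi i u / p}$ and applying the orthogonality relation $\sum_{t=0}^{p-1} e_p(tu) = p \cdot \mathbf{1}_{\{u \equiv 0 \bmod p\}}$, one obtains the identity
\[
\sum_{x=0}^{N} \chi_p\bigl(f(x)\bigr) = \frac{1}{p} \sum_{t=0}^{p-1} G(t)\, S(t), \qquad G(t) := \sum_{y=0}^{N} e_p(-ty), \qquad S(t) := \sum_{x=0}^{p-1} \chi_p\bigl(f(x)\bigr) e_p(tx).
\]
This decouples the length-$N$ constraint from the character behaviour and reduces the problem to estimating $G$ and $S$ individually.

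Next I would estimate each factor. The geometric sum $G(t)$ satisfies $|G(0)| = N+1 \le p$ and, for $t \ne 0$, $|G(t)| \le 1/|\sin(\pi t / p)|$ by the closed form of a finite geometric series. Using the standard estimate $\sin(\pi t/p) \ge 2t/p$ on $(0, p/2]$ together with symmetry about $p/2$, a direct comparison yields $\sum_{t=1}^{p-1} |G(t)| = O(p \log p)$, which is exactly where the $\log p$ factor in the conclusion arises. For the complete character sum $S(t)$, the hypotheses that $f$ is monic of degree $d$ with distinct roots in $\mathbb{Z}_p$ allow invocation of Weil's bound: the $t=0$ case is the pure Weyl sum bounded by $(d-1)\sqrt{p}$, already quoted in the excerpt, while for $t \ne 0$ the mixed multiplicative--additive sum is bounded by $d\sqrt{p}$ via Weil's theorem for exponential sums of the form $\sum_x \chi(f(x)) \psi(x)$ with $\psi$ a non-trivial additive character, uniformly in $t$.

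Combining these two estimates then gives
\[
\Bigl| \sum_{x=0}^{N} \chi_p\bigl(f(x)\bigr) \Bigr| \le \frac{1}{p}\Bigl( (N+1)(d-1)\sqrt{p} + d\sqrt{p} \sum_{t=1}^{p-1} |G(t)| \Bigr) \le d\sqrt{p}\,(1 + \log p),
\]
valid once $p$ is large enough that the cosecant sum is cleanly absorbed and the $t=0$ contribution (bounded by $(d-1)\sqrt{p}$ since $(N+1)/p \le 1$) is dominated by the $d\sqrt{p}\log p$ term. The main obstacle is the uniform Weil-type bound on $S(t)$ for $t \ne 0$: this genuinely relies on the distinct-roots hypothesis (to keep the associated algebraic curve smooth and the corresponding $L$-function factor non-degenerate) and on Weil's proof of the Riemann hypothesis for curves over finite fields, rather than on any elementary manipulation. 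Everything else is routine Fourier analysis on $\mathbb{Z}_p$.
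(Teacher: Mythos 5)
The paper does not actually prove this statement---it is imported verbatim as Theorem 9.2 of the cited reference \cite{residues}---so there is no internal proof to compare against. Your argument (completing the sum with additive characters, bounding the geometric factor via $\sum_{t\ne 0} 1/|\sin(\pi t/p)| = O(p\log p)$, applying Weil's bound $|S(t)|\le d\sqrt{p}$ to the mixed multiplicative--additive complete sums for $t\ne 0$, and absorbing the $t=0$ contribution of at most $(N+1)(d-1)\sqrt{p}/p \le d\sqrt{p}$ into the ``$+1$'' of the factor $1+\log p$) is the standard P\'olya--Vinogradov completion proof of exactly this estimate, and it is correct.
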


\begin{proof}[Proof of Theorem \ref{const}]

Suppose that $\Phi_a$ and $\Phi_b$ are two distinct columns of $\Phi$, i.e., $a \ne b$. Then \begin{equation} \label{mu} \langle \Phi_a , \Phi_b \rangle = \frac{1}{\lceil p^{3/4} \rceil } \sum_{i=1}^{ \lceil p^{3/4} \rceil } \Big( \frac{b-i}{p} \Big) \Big( \frac{a-i}{p} \Big) \end{equation} Next, let $f(x):=(x-a)(x-b)$, $\chi (n) := \Big( \frac{n}{p} \Big)$ (which is a character modulus $p$), and $N=\lceil p^{3/4} \rceil $. Then, $f(x)$ is of degree $d=2$, and hence we can use Theorem \ref{legendresymb}.  Also, note that for any integer $i \in \mathbb{Z}_p$, we have $f(i)=(b-i)(a-i)$, and hence, $$\chi \Big( f(i) \Big) = \Big( \frac{(b-i)(a-i)}{p} \Big) =(\frac{b-i}{p} ) (\frac{a-i}{p} ) $$

\noindent Therefore, 

$$  \begin{aligned}   \Big| \sum_{i=1}^{\lceil p^{3/4} \rceil }    \Big( \frac{b-i}{p} \Big) \Big( \frac{a-i}{p} \Big) \Big| &  =   \Big| \sum_{i=1}^{\lceil p^{3/4}\rceil}  \chi_p  \Big( f(i) \Big) \Big| \\ & \leq \Big| \sum_{i=0}^{\lceil p^{3/4}\rceil}  \chi_p  \Big( f(i) \Big) \Big| +1 \\ & \leq 2(1+\log p) \sqrt{p} +1 \leq (3+2 \log p) \sqrt{p} \leq (3 \log p) \sqrt{p}  \end{aligned} $$

\noindent where we used the assumption $p \geq 23$ to conclude $\log p \geq 3$, i.e., $p \geq e^3$. \\

\noindent Since the inequality above is valid for \textit{any} two distinct values of $a$ and $b$,  using \eqref{mu} we can conclude $$ \mu = \max_{ a \ne b} \Big| \langle \Phi_a, \Phi_b \rangle \Big| \leq \frac{3 p^{1/2} \log p}{p^{3/4}}= \frac{3 \log p}{p^{1/4}}$$ as desired. 

\end{proof}

\section{One-stage recovery for $\Sigma \Delta$-quantized compressed sensing with deterministic partial circulant matrices}
\label{onestagenewconst}

When a sparse or compressible signal is acquired according to the principles of CS, the measurements still take on values in the continuum. In today's digital world a subsequent quantization step, where these measurements are replaced with the elements from a finite set is crucial.  In \cite{rongrong} an efficient approach for quantization in CS has been derived: $\Sigma \Delta$ quantization followed by a one-stage tractable reconstruction method. It is shown in \cite{rongrong} that this method is stable respect to noise and robust respect to compressible signals. However, there was one caveat in their work: it was applied only for the class of sub-Gaussian matrices. In \cite{armanozgur}, this method was generalized by abstracting out one main underlying property (called Property (P1)) as defined as follows.

\noindent \textbf{Property (P1).} Suppose that $\Phi$ is an $m \times n$ unnormalized CS measurement matrix, with (expected) column norm of $\sqrt{m}$. We say that $\Phi$ satisfies the property (P1) of order $(k,\ell)$ if the RIP constant of $\frac{1}{\sqrt{\ell}} (\Phi)_{\ell}$---where $(\Phi)_{\ell}$ is the restriction of $\Phi$ to its first $\ell$ rows---satisfies $\delta_{2k} < 1/9$. 
\smallskip

 It is shown in \cite{armanozgur} that we can perform a one-stage reconstruction $\Sigma \Delta$ quantization method for any CS measurement matrix as far as the measurement matrix $\Phi$ satisfies the property \textit{(P1)} of order $(k,\ell)$, as defined above.  In particular, it is shown that if $D$ is the difference matrix, and we write the singular value decomposition of $D^r$ in the form $D^r=U \Sigma V^T$, and if the (original) measurement matrix satisfies \textit{(P1)} of order $(k,\ell)$, then we can use a (modified) measurement matrix $\tilde{\Phi}=U \Phi$, and the error in reconstruction using the algorithm 
 
 \begin{align} (\hat{x},\hat{\nu})  := \arg \min_{(z,\nu)} \|z\|_1 & \mbox{ s.t. }   \|D^{-r} (U \Phi z +\nu-q) \|_2 \leq C_r \delta \sqrt{m},\notag \\ 
& \mbox{ and } \| \nu \|_2 \leq \epsilon \sqrt{m}. \label{onestagesol}
\end{align}  
  satisfies \begin{equation} \label{lm6} 
\|x-\hat{x}\|_2 \leq C \Big( (\frac{m}{\ell})^{-r+1/2}\delta +\frac{\sigma_k(x)}{\sqrt{k}} + \sqrt{\frac{m}{\ell}}\epsilon\Big) \end{equation}
where $C$ is a constant that does not depend on $m,\ell,n$.  Now, we show that our proposed construction satisfies the Property (P1).

\begin{prop} The partial circulant matrix $\Phi$, as defined in Definition \ref{partialdefi}, satisfies the property (P1)  of order $(k,\ell)$ for $\ell= \lceil p^{5/8} \rceil$, and $k=\lceil \frac{p^{1/8}}{\log p} \rceil$. Hence, in terms of number of measurements $m=\lceil p^{3/4} \rceil $, and the ambient dimension  $n=p$, we have  $\ell=\mathcal{O}(m^{5/6})$, and $k=\mathcal{O}(\frac{m^{1/6}}{\log n})$. 
\end{prop}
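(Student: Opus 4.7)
The plan is to mirror the coherence computation of Theorem~\ref{const} but apply it to the submatrix formed by the first $\ell=\lceil p^{5/8}\rceil$ rows rather than the first $\lceil p^{3/4}\rceil$ rows; once this submatrix has been shown to have small coherence, a standard Gershgorin-type argument will yield the RIP bound $\delta_{2k}<1/9$ required by Property~(P1).

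First I would reconcile normalizations. The $\Phi$ of Definition~\ref{partialdefi} already has unit-norm columns; multiplying through by $\sqrt{m}=\sqrt{\lceil p^{3/4}\rceil}$ produces the $\pm 1$-valued matrix $\widetilde{\Phi}$ (with $1$'s on the diagonal) to which Property~(P1) is directly addressed. Hence $\frac{1}{\sqrt{\ell}}(\widetilde{\Phi})_\ell$ has columns of unit $\ell_2$ norm, and the inner product of its two distinct columns indexed by $a\neq b$ equals $\frac{1}{\ell}\sum_{i=1}^{\ell}\chi_p(f(i))$ with $f(x)=(x-a)(x-b)$, exactly as in the proof of Theorem~\ref{const}. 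Applying the Incomplete Weyl-Sum Estimate (Theorem~\ref{legendresymb}) to this degree-$2$ monic polynomial with distinct roots gives, for $\ell=\lceil p^{5/8}\rceil$ and all sufficiently large primes $p$,
$$\mu_\ell \;\leq\; \frac{2(1+\log p)\sqrt{p}+1}{\ell} \;\leq\; \frac{3\log p}{p^{1/8}}.$$

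Next I would invoke the standard coherence-to-RIP bound $\delta_{2k}\leq (2k-1)\,\mu_\ell$, which follows by applying the Gershgorin disc theorem to the Gram matrix of any $2k$ unit-norm columns. Combining with the above yields $\delta_{2k}\leq 2k\cdot 3\log p/p^{1/8}$, so choosing $k$ to be a sufficiently small constant multiple of $p^{1/8}/\log p$ forces $\delta_{2k}<1/9$ and establishes Property~(P1). The translation to $\ell=\mathcal{O}(m^{5/6})$ and $k=\mathcal{O}(m^{1/6}/\log n)$ is then immediate from $m=\lceil p^{3/4}\rceil$ and $n=p$.

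The main obstacle I anticipate is bookkeeping rather than mathematical in nature: Gershgorin yields a rather loose RIP estimate, so one must verify that the absolute constant in the Weyl-sum bound actually permits $k$ on the order of $p^{1/8}/\log p$ to satisfy $\delta_{2k}<1/9$ (possibly after absorbing a universal multiplicative constant into the $\mathcal{O}$). No genuinely new idea is needed beyond rerunning the Theorem~\ref{const} argument with the exponents $(3/4,\,1/4)$ replaced by $(5/8,\,1/8)$.
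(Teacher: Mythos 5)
Your proposal matches the paper's proof: the paper likewise just reruns the coherence computation of Theorem~\ref{const} with $\ell=\lceil p^{5/8}\rceil$ rows in place of $\lceil p^{3/4}\rceil$ and then applies the coherence-to-RIP bound $\delta_k \leq k\mu$. Your closing caveat about the absolute constant is well taken --- the paper's version states $\mu \leq 18\log p/p^{1/8}$ and then asserts $\delta_k < k\mu = 1/9$ for $k \leq p^{1/8}/\log p$, which only holds after shrinking $k$ by a constant factor, exactly the bookkeeping issue you flag.
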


\begin{proof} In one stage recovery for $\Sigma \Delta$-quantized CS as described in \cite{rongrong, armanozgur}, we start with a matrix that satisfies RIP. Then, we consider the first $\ell$ (for $\ell$ as small as possible) rows of the $m \times n$ matrix $\Phi$, and the resultant matrix still satisfies RIP. The technical issue is that such property does not hold for an arbitrary measurement matrix, e.g., see Section IV of \cite{armanozgur}.  However, our construction here has the advantage that the matrix obtained by considering its first $\ell$ rows is still a partial circulant binary matrix. Hence, to obtain the RIP constants of a matrix of the form \begin{equation} \label{ellmatrix} \Phi=\frac{1}{\sqrt{\ell}} \Phi_{\ell} \end{equation}

\noindent where $\Phi_{\ell}$ is the restriction of $\Phi$ as defined above to its first rows, we simply replace $m=\lceil p^{3/4} \rceil $ in the proof of Theorem \ref{const}, with $\ell=  \lceil p^{5/8} \rceil $. Then, we conclude that the coherence of this matrix given in \eqref{ellmatrix}, satisfies $$\mu \leq \frac{18 \sqrt{p} \log p}{p^{5/8}}= \frac{18 \log p}{p^{1/8}}$$ 

\noindent Hence, if the sparsity level satisfies $k \leq \frac{p^{1/8}}{\log p}$, we have $$\delta_k <k\mu=\frac{1}{9}$$

\end{proof}

 \noindent The following Corollary is immediate by combining the theorem above, and Theorem \ref{recoveryp1}. 
 
 \begin{cor}
 
 Consider the deterministic partial circulant matrix $\Phi$ of order $\lceil p^{3/4} \rceil \times p$, as defined in Definition \ref{partialdefi}. Then, for $k\leq \frac{p^{1/8}}{\log p}$, any $k$-sparse signal $x$ can be approximated with  the vector $\hat{x}$, the solution to \eqref{onestagesol}. Here, the measurement matrix that we use is $U \Phi$,  and  $q$ is obtained from $rth$ order $\Sigma\Delta$ quantization. Moreover, as we increase the number of measurements $p$, the error in approximation decreases according to \begin{equation} \label{kfixed} \|x-\hat{x} \|_2 \leq  2 C_r C_4 (3 \pi r)^r (m^{1/6})^{-r+1/2} +2C_4 \epsilon \sqrt{m^{1/6}}  \end{equation} for constant $C_4$ depending only on RIP constant of $\Phi$, and where $m=\lceil p^{3/4} \rceil$ denotes the number of measurements. 

\end{cor}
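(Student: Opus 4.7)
The plan is to invoke the preceding proposition, which shows that $\Phi$ satisfies Property (P1) of order $(k,\ell)$ with $\ell=\lceil p^{5/8}\rceil$ and $k\leq p^{1/8}/\log p$, and then apply the generic one-stage $\Sigma\Delta$ recovery guarantee (Theorem \ref{recoveryp1}, restated here as the bound \eqref{lm6} from \cite{armanozgur}). Since (P1) is the only hypothesis required by that theorem and we have already verified it, essentially everything reduces to substituting the specific exponents of our construction into \eqref{lm6}.

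First I would use the $k$-sparsity of $x$ to eliminate the middle term of \eqref{lm6}: one has $\sigma_k(x)=0$, so only the quantization term $(m/\ell)^{-r+1/2}\delta$ and the noise term $\sqrt{m/\ell}\,\epsilon$ remain. Next, I would compute the oversampling ratio in terms of $p$: with $m=\lceil p^{3/4}\rceil$ and $\ell=\lceil p^{5/8}\rceil$, a direct calculation (carefully handling the ceiling functions, which affect only lower-order terms) gives $m/\ell \asymp p^{3/4-5/8}=p^{1/8}$. To re-express this in terms of $m$ alone, I would invert $m \asymp p^{3/4}$ to get $p \asymp m^{4/3}$, whence $m/\ell \asymp (m^{4/3})^{1/8}=m^{1/6}$. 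Plugging this into the two surviving terms of \eqref{lm6} yields exactly the asserted form $(m^{1/6})^{-r+1/2}\delta$ and $\sqrt{m^{1/6}}\,\epsilon$.

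The final bookkeeping step is to identify the constant $C$ in \eqref{lm6}. As tracked in \cite{armanozgur}, this constant splits as a product of the $\Sigma\Delta$-scheme factor $C_r(3\pi r)^r$ (coming from the singular-value estimates for $D^r$) and a factor $C_4$ depending only on the RIP constant of $\Phi$ (which in turn depends only on the coherence bound from Theorem \ref{const}). Absorbing the constant $2$ that arises from bounding the sum of the two terms by twice the maximum (and from the ceiling adjustments) gives the form displayed in \eqref{kfixed}.

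The main obstacle, such as there is one, is the clean bookkeeping of exponents under the change of variables between $p$ and $m$, together with ensuring the ceiling functions in the definitions of $m$ and $\ell$ do not spoil the asymptotic relation $m/\ell \asymp m^{1/6}$. Both are routine: no new inequality needs to be proved, since the coherence bound (Theorem \ref{const}) and the RIP-reduction to Property (P1) (the proposition above) already do all the analytical work.
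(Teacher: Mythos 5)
Your proposal is correct and follows exactly the route the paper intends: the paper itself dismisses the corollary as ``immediate by combining the theorem above, and Theorem \ref{recoveryp1}'', i.e.\ Property (P1) from the preceding proposition plus the generic one-stage $\Sigma\Delta$ bound \eqref{lm6}, with $\sigma_k(x)=0$ and the substitution $m/\ell \asymp p^{1/8} \asymp m^{1/6}$. Your exponent bookkeeping and constant tracking simply make explicit what the paper leaves implicit.
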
 

\section{Numerical experiments}

First, we give an example of the new construction when $p=997$.  Figure \ref{matrix-figure3} shows the matrix constructed with this method. 

%
%
%
%

To compare the construction we introduced in this chapter with other existing deterministic constructions extensively, we run four numerical experiments. 

 In the first experiment, we compute the coherence of  matrices as defined in Definition \ref{partialdefi}, in order to find out about the tightness of the theoretical bound we derived in this chapter. In this experiment, for each value of $p$, with $71 \leq p \leq 1193$,  we start with the $\lceil p^{3/4} \rceil \times p $ matrix $\Phi$, then we construct  $\Phi^* \Phi$. Next, we calculate the maximum off-diagonal entry (in absolute value) of this Gramian matrix.  This gives the coherence of the matrix $\Phi$ and we plot this in log-log scale in Figure \ref{coherence4}. As seen  in this Figure, the coherence of these matrices as a function of $p$, behaves almost like $f(p)=\frac{1}{p^{1/4}}$, which apart from a logarithm factor, matches with the bound we derived in Theorem \ref{const}. In order to make a comparison, we also plot the coherence of chirp sensing matrices (of the size $p \times p^2$) as a function of $p$. Note that the coherence of chirp sensing matrices matches precisely with $f(p)=p^{-1/2}$. 

In the second  experiment, we fix a measurements matrix $\Phi$ (random or deterministic) and we consider a $k$-sparse signal $x \in \mathbb{R}^{300}$. For each $2 \leq k \leq 102$, we choose a random support $T \subseteq \{1,2,...,300\}$ with $k$ elements, and we choose non-zero entries of $x$ independently from standard Gaussian distribution.  Then, we compute the measurement vector $y=\Phi x$, and we use \textit{BP} to  to find the reconstruction vector $\hat{x}$. Next, we compute SNR for the signal $x$ defined by \begin{equation} \label{snrd} \mbox{SNR} (x) = 10 \cdot \log_{10} \Big( \frac{ \|x \|_2 }{\|x-\hat{x} \|_2} \Big) dB \end{equation} 

\noindent If we obtain $\mbox{SNR}(x) > 50$, we consider the reconstruction as a \textit{successful} recovery and otherwise an \textit{unsuccessful} recovery. For each $k$, we repeat this experiment 10 times and we  let $f(k) := \frac{\mbox{number of successful recoveries}}{10}$. In Figure \ref{constructionlegendre}, we plot $f(k)$ vs. $k$ for different constructions. For our proposed construction, we choose a prime number close to 1000, say $p=997$, which leads to the number of measurements $m=\lceil 997^{3/4} \rceil=178$. We can choose the same number of measurements for random Bernoulli matrix. For Reed-Muller construction, since the number of measurements is a power of 2, we choose the smallest power of 2, greater than 178, namely, $m=256$. For DeVore construction, we choose $m=169$ (the closest integer of the form $p^2$ to 178). Accordingly, in Figure \ref{constructionlegendre}, we use Reed-Muller construction of the size $256 \times 2^{36}$, restricted to its first 300 columns (as $x\in \mathbb{R}^{300}$), DeVore construction of the size $13^2 \times 13^3$, restricted to its first 300  columns, the novel construction of the size $\lceil 997^{3/4} \rceil \times 997=178 \times 997$, restricted to its first 300 columns, and also random Bernoulli of the size $178 \times 300$. As we see in this Figure, the novel construction introduced in this chapter has the best performance among the other deterministic constructions mentioned above, and its performance is comparable to the random Bernoulli matrices.


 In the third experiment, we fix $k=10$, and we consider a $k$-sparse signal $x \in \mathbb{R}^{40}$, with random support and non-zero entries chosen independently from the standard Gaussian distribution, and we consider the new construction with $m=\lfloor p^{3/4} \rfloor $, and $n=p$ (with $41 \leq p \leq 293$). For each value of $p$, we evaluate $y=\Phi x$, we approximate $x$ with $\hat{x}$ using \textit{BP}, and we evaluate SNR using \eqref{snrd}. Similar to above, if $SNR > 50$, we consider the reconstruction as a successful recovery, and otherwise an unsuccessful recovery. We repeat the same experiment 10 times and we define $f(p) :=\frac{\mbox{number of successful recoveries}}{10}$. We plot $f(p)$ vs. $p$ in Figure \ref{matrix-figure}. We repeat the same process above with $k=20$. Lastly, we perform the same process with $k=10$ and $k=20$ using random Bernoulli matrices with exactly same sizes, i.e., $\lfloor p^{3/4} \rfloor \times p$, with $41 \leq p \leq 293$. Figure \ref{matrix-figure} also suggests that the performance of the new construction is comparable with the random Bernoulli matrices.


\begin{figure}

    \centering
    \includegraphics[  width=0.4\textwidth   ]{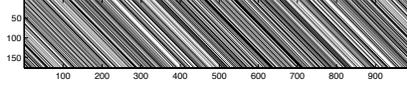}

\caption{ The binary matrix given by the new construction for $p=997$. }
\label{matrix-figure3}
\end{figure}

\begin{figure}

    \centering
    \includegraphics[ width=0.4\textwidth]{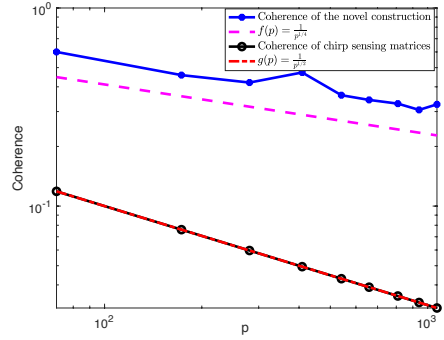}

\caption{ Coherence of the matrices introduced in this chapter, and also chirp sensing matrices in log-log scale, accompanied with best fitted lines. As seen in this Figure, the coherence of our construction behaves as $\sim m^{-1/3}$,  and for chirp sensing matrices behaves as $\sim p^{-1/2}=m^{-1/2}$.  }
\label{coherence4}
\end{figure}

\begin{figure}

    \centering
    \includegraphics[ width=0.4\textwidth ]{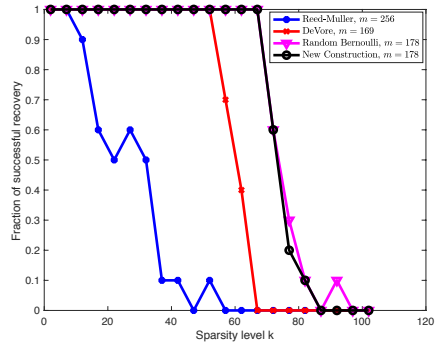}

\caption{ The fraction of exactly recovered vectors versus sparsity for a fixed number of measurements. The number of measurements is chosen as $m=256$ for the Reed-Muller matrix, $m=178$ for the new construction and also for the Bernoulli matrix, and $m=169$ for the DeVore's construction. The ambient dimension of all signals is $n=300$ }
\label{constructionlegendre}
\end{figure}

\begin{figure}

    \centering
    \includegraphics[width=0.4\textwidth ]{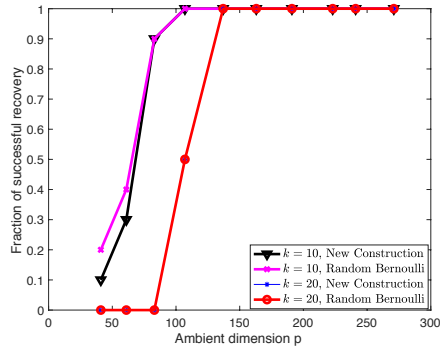}

\caption{ The graph of fraction of exactly recovered vectors (for 10 experiments) versus prime number $p$ for a fixed level of sparsity ($k=10$ or 20) for the new construction and the Bernoulli matrices. Note that only three graphs are shown because the graphs corresponding to $k=20$ for the new construction and random Bernoulli exactly overlap with each other. This suggests that our proposed deterministic construction has a very similar performance to random Bernoulli.}
\label{matrix-figure}
\end{figure}

\begin{figure}

    \centering
    \includegraphics[width=0.4\textwidth ]{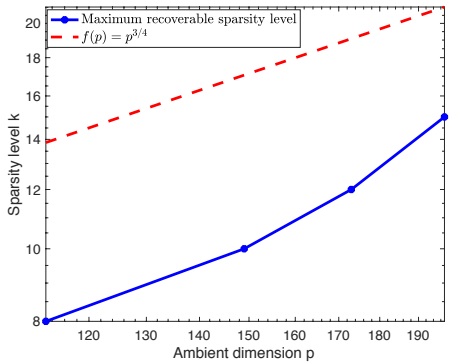}

\caption{ The maximum sparsity level of  recoverable signals $g(p)$ versus the prime number $p$ compared with the graph of $f(p)=p^{3/4}$.}
\label{compsubopt}
\end{figure}



In the last experiment, for a fixed value of $p$ (which fixes the number of measurements $m=\lceil p^{3/4} \rceil$), and with $113\leq p \leq 197$, we consider a 1-sparse signal $x \in \mathbb{R}^{100}$ with random support and the non-zero entry chosen from a standard Gaussian distribution. Then, we  reconstruct it using \textit{BP}, and we measure $\mbox{SNR}(x)$. We repeat the experiment 50 times for 50 signals $x$ and if the minimum value  of $\mbox{SNR}(x)$ becomes greater than 50 dB, we consider that level of sparsity as exactly recoverable and we increase the sparsity level by 1 unit. Then, we repeat the same experiment until the minimum value of $\mbox{SNR}(x)$ (among all 50 experiments) becomes less than 50 dB for a sparsity level $k_1$. Then we define $g(p):=k_1-1$ as the maximum level of recoverable sparsity corresponding to $p$. The graph of $g(p)$ versus $p$ is plotted in Figure \ref{compsubopt} in log-log scale.  In the suboptimal case, we know that the maximum level of sparsity is $m/ \log(n/m)$ which will be $p^{3/4}/ (4 \log p)$. Therefore, in addition to the graph of  $g(p)$, we plot the graph of $f(p)=p^{3/4}$ and we compare these graphs. As it can be seen in this Figure, our construction, like the random CS matrices has the feature that numerically, the maximum recoverable sparsity level behaves close to the suboptimal case.

\bibliographystyle{plain}
\bibliography{novel_const_01}



%
%


\end{document}